\newtheorem{proof}{\textbf{Proof}} 
\newtheorem{corollary}{Corollary}
\IEEEoverridecommandlockouts\IEEEpubid{\makebox[\columnwidth]{ 979-8-3503-5171-2/24/\$31.00 $\copyright$2024 IEEE \hfill}\hspace{\columnsep}\makebox[\columnwidth]{ }}    
\begin{document}


\setlength{\paperheight}{11in}
\setlength{\paperwidth}{8.5in}

\newcommand{\name}{{\sc StarVeri}\xspace}
\newcommand{\etc}{\emph{etc.}}
\newcommand{\ie}{\emph{i.e.,}}
\newcommand{\eg}{\emph{e.g.,}}
\newcommand{\etal}{\emph{et al.}}

\newcommand{\citeColored}[1]{{\hypersetup{citecolor=blue}\cite{‌​#1}}}

\newenvironment{itmz}{\vspace{-\topsep}\begin{itemize}[leftmargin=*]\parskip0pt\partopsep0pt\itemsep3pt\topsep0 pt}{\end{itemize}\vspace{-\topsep}}

\newenvironment{enum}{\vspace{-\topsep}\begin{enumerate}[leftmargin=*]\parskip0pt\partopsep0pt\itemsep3pt\topsep0 pt}{\end{enumerate}\vspace{-\topsep}}

\newenvironment{desc}{\begin{description}\parskip0pt\partopsep0pt\itemsep3pt\topsep0pt}{\end{description}}

\newcommand{\cmark}{\ding{51}}%
\newcommand{\xmark}{\ding{55}}%

\title{\name: Efficient and Accurate Verification for Risk-Avoidance Routing in LEO Satellite Networks}

\author{
	\IEEEauthorblockN{Chenwei Gu$^{\dag}$, Qian Wu$^{\dag\ddag}$, Zeqi Lai$^{\dag\ddag\textsuperscript{${^\P}$}\thanks{${^\P}$~Zeqi Lai is the corresponding author.}}$, Hewu Li$^{\dag\ddag}$, Jihao Li$^{\ddag}$, Weisen Liu$^{\dag}$, Qi Zhang$^{\ddag}$, Jun Liu$^{\dag\ddag}$, Yuanjie Li$^{\dag\ddag}$}
	\IEEEauthorblockA{
		\textit{$^{\dag}$Institute for Network Sciences and Cyberspace, Tsinghua University, Beijing 100084, China} \\
		\textit{$^{\ddag}$Zhongguancun Laboratory, Beijing, China}\\
	}
} 

\maketitle

\begin{abstract}
	
Emerging satellite Internet constellations such as SpaceX's Starlink will deploy thousands of broadband satellites and construct Low-Earth Orbit~(LEO) satellite networks~(LSNs) in space, significantly expanding the boundaries of today's terrestrial Internet. However, due to the unique global LEO dynamics, satellite routers will inevitably pass through uncontrolled areas, suffering from security threats. It should be important for satellite network operators~(SNOs) to enable \emph{verifiable risk-avoidance routing} to identify path anomalies. In this paper, we present \name, a novel network path verification framework tailored for emerging LSNs. \name addresses the limitations of existing crypto-based and delay-based verification approaches and accomplishes \emph{efficient and accurate path verification} by: (i) adopting a dynamic relay selection mechanism deployed in SNO's operation center to judiciously select verifiable relays for each communication pair over LSNs; and (ii) incorporating a lightweight path verification algorithm to dynamically verify each segment path split by distributed relays. We build an LSN simulator based on real constellation information and the results demonstrate that \name can significantly improve the path verification accuracy and achieve lower router overhead compared with existing approaches.
\end{abstract}
\begin{IEEEkeywords}
LEO Satellite Network, Path Verification.
\end{IEEEkeywords}
\section{Introduction}
\label{sec:introduction}
Low-Earth Orbit~(LEO) Satellite Networks~(LSNs) are gaining tremendous popularity in recent years, carrying a large fraction of Internet traffic~\cite{traffic_boom_news}. Many ``NewSpace'' players like SpaceX~\cite{starlink} and Amazon Kuiper Project~\cite{kuiper} are constructing their own LSNs powered by laser inter-satellite links~(ISLs)~\cite{fcc1,llc} to provide global Internet services. In particular, Starlink, the largest operational LSN today, has launched more than 6300 LEO satellites~\cite{Jonathan} and attracted more than 3 million global subscribers as of May 2024~\cite{new-pop}.


As LSNs are developing at such a fast pace, security issues have become increasingly prominent. Potential threats in LSNs include not only traditional electromagnetic interference~\cite{twitter-news}, but also new threats in cyberspace since LSNs target to provide global Internet services. Fundamentally, LSNs have a unique characteristic different from the terrestrial Internet: the core network infrastructures in an LSN~(\eg\,a large number of satellite routers) are moving in their free-space orbits globally, and satellites will inevitably travel to uncontrolled and risky regions overseas. Such LEO dynamics can involve potential security risks such as traffic hijacking and information leakages~\cite{info_steal} by hacking satellites~\cite{attack1,attack2,news-hijack1}. Therefore, for satellite network operators~(SNOs), it should be important to enforce the forwarding path to bypass potential risk areas and, more importantly, verify that the actual forwarding paths do avoid the risk areas.

The network community has a very long history of working on path or routing verification. Depending on the verification criteria, previous approaches can be classified into two main categories: crypto-based and delay-based path verification. In the former, each intermediate node performs cryptographic operations on forwardings packets and appends a Message Authentication Code~(MAC) to each packet~\cite{CoNext11-pv,SIGCOMM14-pv,sec20-pv}. Hence, the path can be verified by checking a chain consisting of nested MACs. However, such complex cryptographic operations incur substantial computation and bandwidth overheads on resource-constrained satellite routers.
	 

To avoid high router overhead, other efforts have proposed lightweight delay-based approaches to verify the network paths by comparing the estimated delay of the planned path with the real delay of the actual path~\cite{SIGCOMM15-pv,sec17-pv,NDSS19-pv}. In delay-based approaches, the network operator typically has to pre-deploy a set of \emph{relay nodes} in advance, which are geographically very far away from the risk area. The path from a source to its destination is forced to pass through a certain relay. The underlying verification principle is that: due to the long distance between the relay and the risk area, if the actual path detours and passes through the risk area, the end-to-end path delay should be significantly higher than the estimated value. Note that this approach estimates the path delay based on a fundamental assumption in today's terrestrial Internet that the propagation delay between two network nodes can be approximately estimated by a linear function of the physical distance between them~\cite{SIGCOMM09-liner_relation}. However, in emerging LSNs, due to the unique global LEO dynamics, the entire network topology and paths fluctuate frequently~\cite{HotNets18-routing,mobicom24-routing}, causing the linear assumption does not hold anymore. As a result, existing delay-based methods suffer from poor accuracy in LSNs.

To overcome the inefficiency and inaccuracy of existing verification approaches, in this paper, we present \name, a novel verification framework that extends existing efforts and accomplishes efficient and accurate network path verification in dynamic LSNs, by exploiting the following two techniques.

First, \name incorporates a dynamic relay selection algorithm together with a flexible relay-based traffic steering mechanism for SNOs to dynamically schedule LSN traffic to bypass risk areas. Unlike previous approaches that depend on static relays which may result in high verification errors, \name proposes the \emph{Nearest Low-Risk Planes}~($NLRP$) to limit the risk nodes in a small range, and dynamically selects relays based on $NLRP$.
\if 0
which effectively supports path verification while guaranteeing lower prolonged routing delay. 
\fi
As such, \name effectively avoids the risk areas without involving too much delay caused by risk-bypassing meandering routes.

Second, \name adopts a lightweight avoidance verification algorithm that integrates the routing information and propagation delays to jointly verify the path compliance between the planned and the actual paths. \name verifies the entire network path by verifying all segments divided by relays. Specifically, in \name, only relays perform MAC operations to authenticate that the packets indeed pass through them and conduct delay-based verification for each segment path. Instead of using the linear RTT estimation like~\cite{SIGCOMM15-pv,sec17-pv,NDSS19-pv}, \name first adopts an inter-relay probing mechanism to obtain each segment delay ground truth for a period. Then \name estimates the segment detour delay threshold that functions as a jitter buffer by computing twice the minimum propagation delay from each node on the segment to the risk nodes based on the predictable satellite trajectory and topology \cite{CoNext23-bg}. The sum of them is the segment delay upper bound for determining if packets have traversed the risk area in this segment path.

To evaluate the effectiveness of \name, we conduct a large-scale simulation by combining real-world LSN information~\cite{starlink,kuiper} and the recent LSN simulator~\cite{StarryNet}. Extensive experiments demonstrate that \name can accomplish: (i) \emph{high verification accuracy:} \name obtains near-to-$100\%$ verification accuracy for city pairs served by Starlink and Kuiper constellations;
(ii) \emph{low delay penalty caused by detours:} compared with Alibi Routing~\cite{SIGCOMM15-pv}, \name can  largely reduce the delay of verifiable risk-avoidance paths; and (iii) \emph{better scalability and performance:} \name achieves low processing overhead and can scale to LSNs with thousands of nodes, and \name's achievable goodput ratio is much higher than existing crypto-based approaches~\cite{CoNext11-pv,SIGCOMM14-pv,sec20-pv}.



In summary, the main contributions of this paper include:

\begin{itemize}[leftmargin=*]
	\item We highlight the importance of path verification in emerging LSNs, and expose the inefficiency and inaccuracy problems of existing path verification approaches in LSNs~(\S\ref{sec:background}).
	
	\item We present \name, a novel path verification framework tailored for LSNs. \name exploits a dynamic relay-based traffic steering mechanism and a lightweight, segment avoidance verification algorithm to efficiently and accurately verify dynamic network paths in LSNs~(\S\ref{sec:design_overview},\S\ref{sec:design_details}).
	
	\item We build a \name prototype and conduct extensive simulations to demonstrate the effectiveness of \name, in terms of improved verification accuracy, performance, and reduced router overheads~(\S\ref{sec:evaluation}).
\end{itemize}

\section{Threat Model and motivation}
\label{sec:background}

\subsection{Preliminaries for low-Earth orbit satellite networks}
\label{subsec:leo_satellite_network}

\begin{figure}[tb]
	\centering
	\includegraphics[width=0.92\linewidth]{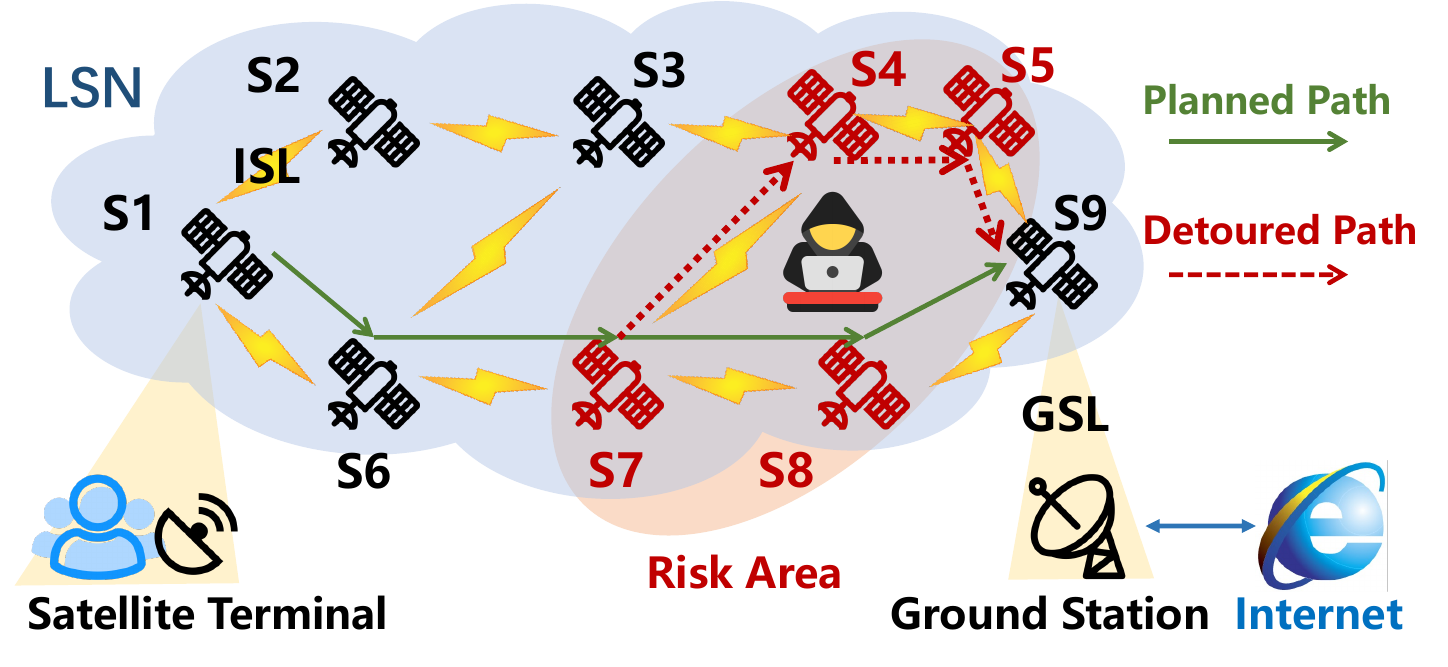}
	\caption{Routing security threats in an LSN: attackers in uncontrolled risk areas may steal information or hijack traffic.}
	\label{fig:LSN_architecture}
	\vspace{-0.3in}
\end{figure}

Today's LSNs like SpaceX's Starlink~\cite{starlink} consist of hundreds to thousands of LEO satellites that work as ``routers in space'', together with many geo-distributed ground stations connecting satellites and terrestrial network infrastructures. Satellites communicate with ground entities~(\eg\,ground stations or satellite terminals) via ground-satellite radio links~(GSL). Besides, many satellite constellations~(\eg\,Starlink and Kuiper) leverage high-speed laser inter-satellite links~(ISLs) for inter-satellite networking and communication. As visualized in \cite{starlink_map}, because satellites move at a high velocity in various orbital directions over the Earth, the entire LSN experiences frequent topology changes, especially in space-ground connections.


LSN traffic between two terrestrial nodes~(\eg\,from a Starlink's satellite terminal to a remote ground station) is forwarded by a network path constructed by uplink/downlink and ISLs. To deal with the unique LSN topology fluctuation, many recent works have proposed space routing mechanisms~(\eg~\cite{HotNets18-routing,mobicom24-routing}) to dynamically build and maintain paths for any two terrestrial nodes connected to the LSN.

\subsection{Potential risks in uncontrolled areas}
\label{subsec:routing_attack_risks}


As plotted in Fig.~\ref{fig:LSN_architecture}, because satellites move globally, they might enter an uncontrolled \emph{risk area}, posing \emph{routing security threats} in LSNs. In this paper, we define a ``risk area'' as a geographical region where malicious attackers may exist there and launch the following attacks on satellites above the area:

\begin{itemize}[leftmargin=*]

\item \textbf{Information stealing.} Attackers in the risk area can eavesdrop on traffic and extract or speculate private data~\cite{info_steal}. As shown in Fig.~\ref{fig:LSN_architecture}, assume traffic from the satellite terminal is expected to be forwarded to the ground station by a planned network path marked by the solid arrows, attackers in risk areas can eavesdrop on traffic forwarded from S7 to S9.

\item \textbf{Traffic hijacking.} More powerful attackers in a risk area can hijack \cite{attack1,attack2,news-hijack1} the route and cause path inconsistency. They propagate error routing advertisements to allure surrounding satellites to forward packets to them and redirect traffic to specific nodes for censorships, or other man-in-the-middle attacks like packet injection, modification, and counterfeit. In the example illustrated in Fig.~\ref{fig:LSN_architecture}, the attacker in the risk area may modify the planned path to S7$\rightarrow$S4$\rightarrow$S5$\rightarrow$S9.



\end{itemize}

Therefore, to avoid the above risks, it should be essential for satellite network operators~(SNOs) to: (i) apply avoidance policies to force their traffic to bypass potential risk areas, and more importantly,  (ii) \emph{verify} that the actual paths are consistent to the planned avoidance policies in practice.

\subsection{Are existing path verification methods sufficient?}
\label{subsec:limitation}

Over the past decade, the network community has had a body of efforts working on network path verification. In particular, existing path verification efforts can be classified into two main categories: \emph{crypto-based} and \emph{delay-based} approaches.

\noindent
\textbf{Crypto-based path verification.} One classic path verification approach is to embed the planned path~(\eg\,a sequence of nodes that build the network path) into the header of each packet. Then intermediate nodes authenticate and update related fields before sending to the next node~\cite{CoNext11-pv,SIGCOMM14-pv,sec20-pv}. These approaches have three limitations in LSNs. First, each intermediate node needs to perform cryptographic operations like calculating MACs hop by hop, which involves high computation overhead that could be unaffordable for resource-constrained satellites. Second, the increased length of the verification header also leads to extra packet header overhead, resulting in goodput ratio reduction~(as will be analyzed in detail in \S\ref{sec:evaluation}). Third, per-hop authentication inevitably involves additional processing delay on each node. Note that the high mobility of LEO satellites incurs frequent path changes~\cite{INFOCOM22-bg,CoNext23-bg,HotNets18-routing}, if the packet processing delay is too high, the pre-calculated path might be invalid on the route.
Although some recent works like PPV~\cite{IWQoS18-pv} and MASK~\cite{ToN23-pv} propose probabilistic authentication instead of verifying every packet hop by hop, the computation overhead can be still high when the traffic volume is large.





\noindent
\textbf{Delay-based path verification.} Alibi Routing~\cite{SIGCOMM15-pv} proposes a new path verification approach, exploiting the key idea that a detour from the planned path to any node in the risk area will breach the maintained delay by incurring significant extra delay. To verify whether a network path from a source ($Src$) to its destination ($Dst$) passes through the risk area, Alibi Routing pre-computes a \emph{target area} that is far away from the risk area where possible verifiable relays~(called alibis) are located. It enforces that the path from $Src$ to $Dst$ must pass through the alibis. Because the relay is very far from the risk area, if the $Src\rightarrow Dst$ path passes through the risk area, the observed end-to-end delay should be significantly higher than the maintained value. 
However, these approaches are based on a fundamental assumption that the delay between two terrestrial nodes has a linear relationship with their great-circle distance. Although this assumption exists in many scenarios in today’s terrestrial Internet~\cite{SIGCOMM09-liner_relation}, it is not applicable in LSNs with highly dynamic topology. An increase in path delay may not necessarily be caused by traversing the risk area but \emph{path changes} due to topology fluctuation.

We conduct a quantitative analysis to demonstrate how the unique topology fluctuations in LSNs affect the verification accuracy of existing delay-based approaches. Our analysis is based on the real Starlink constellation information~\cite{starlink} and an LSN simulation based on StarryNet~\cite{StarryNet}~(details will be illustrated in \S\ref{sec:evaluation}). We build a virtual LSN consisting of 1584 satellites with 72 orbits and 22 satellites per orbit~(Starlink Phase 1, Shell 1) and 165 geo-distributed ground stations~\cite{starlink-gs} around the world. We simulate about 2000 city pairs communicating over the LSN and use the shortest path routing upon the +Grid topology,\,\ie\,a satellite connects two neighboring satellites in the same orbit and two in the adjacent orbits~\cite{HotNets18-routing}.

\begin{figure}[t]
	\centering
	\setlength{\belowcaptionskip}{-20pt}
	\begin{minipage}[t]{0.47\linewidth} 
		\centering{
			\includegraphics[width=\textwidth]{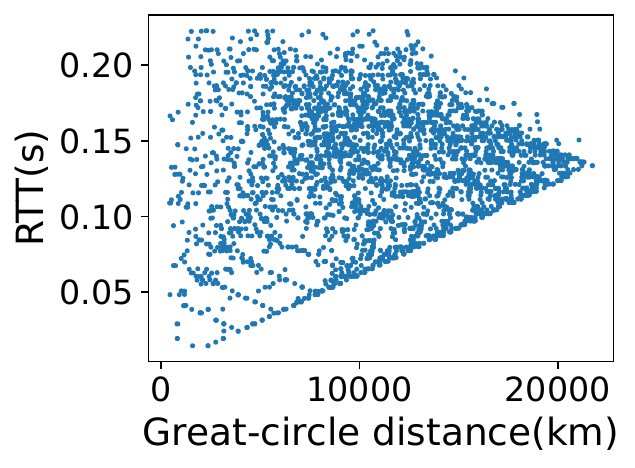}}
		\caption{Non-linear relationship between the great circle distance and RTTs.}
		\label{fig:nonlinear_relation}
	\end{minipage}
	\hspace{1mm}
	\begin{minipage}[t]{0.47\linewidth}
		\centering{
			\includegraphics[width=\textwidth]{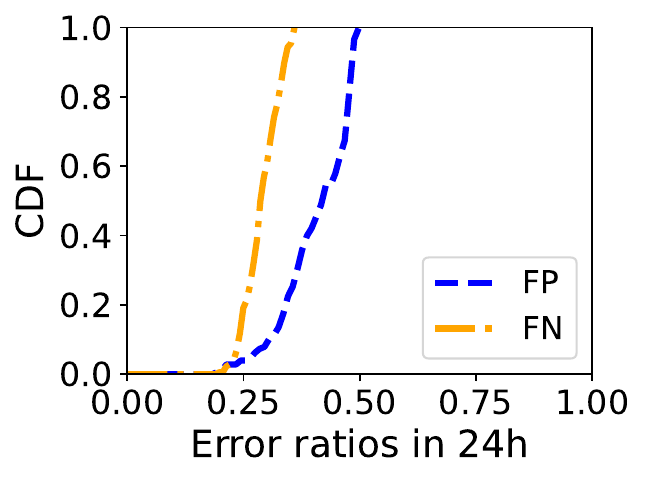}}
		\caption{Verification inaccuracy of Alibi Routing in dynamic LSNs.}
		\label{fig:error_ratio}
	\end{minipage}%
	\hspace{0.5mm}
	\vspace{0.1in}
\end{figure}
\noindent

First, we observe that the linear relationship assumption~\cite{SIGCOMM09-liner_relation} \emph{does not hold} in LSNs as illustrated in Fig.\ref{fig:nonlinear_relation}. On our further investigation, we confirm that the delay increase is caused by frequent topology fluctuations and path changes in LSNs even if the $Src$ and $Dst$ are static on the ground.
Second, we observe the non-linear relationship can lead to inaccuracy for delay-based verification approaches. We simulate the Alibi Routing~\cite{SIGCOMM15-pv} mechanism, set Egypt as a potential risk area, and set a static ground relay for each city pair. We calculate the ratio of false positive~(FP, \ie\,the real path does not traverse the risk area but is considered as passing through it) and false negative~(FN, \ie\,the opposite of FP) of each city pair in 24 hours with an interval of 1 second. These error ratios are defined as the proportion of the amount of time slots that FP or FN occurs to the total amount of time slots. Fig.~\ref{fig:error_ratio} plots the CDF of the error ratios of the city pairs. We observe that delay-based verification methods suffer from high inaccuracy since most city pairs' FN and FP ratios are higher than 25\%.

\noindent
\textbf{Our motivation.} Accomplishing path verification is important in dynamic and uncontrolled LSN environments. However, existing methods suffer from either per-node high processing overheads, or verification inaccuracy, which motivates us to explore a more efficient and accurate solution for network path verification in LSNs, \ie\,the \name framework.

\section{\name Overview}
\label{sec:design_overview}


\subsection{\name architecture}
\label{sec:key ideas}

\noindent
\textbf{Basic idea.} At a high level, \name addresses the limitations of existing approaches in terms of overhead and accuracy as follows. First, to verify a network path from $Src$ to $Dst$, \name adopts a collection of \emph{dynamic satellite relays} to split the entire path into a series of consecutive \emph{segments}. Thus, verifying the entire path is equivalent to verifying each segment path. Second, to reduce the verification overhead, \name only requires relays~(instead of all nodes on the path) to authenticate forwarded packets. Therefore, it can be verified that the forwarding path indeed passes through the predefined relays by checking the relay's authentication chain in $Dst$. Third, while the end-to-end delay of a network path fluctuates due to the LSN's dynamics, the delay fluctuation of a segment~(\eg\,between two adjacent relays)
\if 0
~(\eg\,between two adjacent relays, including the satellites $sat_s$ and $sat_d$ connected to $Src$ and $Dst$) 
\fi
is less dramatic than the delay fluctuation of the entire path. \name verifies each segment by comparing the real inter-relay delay with the estimated upper bound.


\noindent
\textbf{Architecture.} Fig.~\ref{fig:architecture_and_comparison} plots \name's high-level architecture and an illustration of the key differences between \name and existing verification approaches. \name provides the SNO with the ability to verify the path compliance between the planned packet delivery path and the real one. \name can be built upon existing Path-Aware Networking~(PAN)~\cite{ICNP20-SCIONLAB,springer-SCION} architecture that allows end users to self-define paths for packets, or source routing~\cite{SR}, both enforcing network paths to pass through pre-calculated satellite relays. In addition, \name incorporates two new features for dynamic LSN path verification. First, \name adopts a Dynamic Relay Selection mechanism deployed in SNO's operation center to judiciously select relays for every communication pair over the LSN. These relays then split the entire path from $Src$ to $Dst$ into a sequence of consecutive segments. Second, \name incorporates a probing mechanism to periodically obtain the segment delay between any two adjacent key nodes ($sat_s$, relays, and $sat_d$) and an Inter-Relay Verification approach deployed on selected satellite relays and $Dst$ to verify whether the segment path is consistent with the planned path. If the packet on the current segment path does not traverse the risk area, the relay node will perform a symmetric MAC to update the passing packet's authentication fields.



\begin{figure}[t]
	\vspace{-0.1in}
	\centering
	\subfloat[Crypto-based verification: high processing overhead on all satellites.]{
		\begin{minipage}[t]{0.46\linewidth}
			\centering{
				\includegraphics[width=\textwidth]{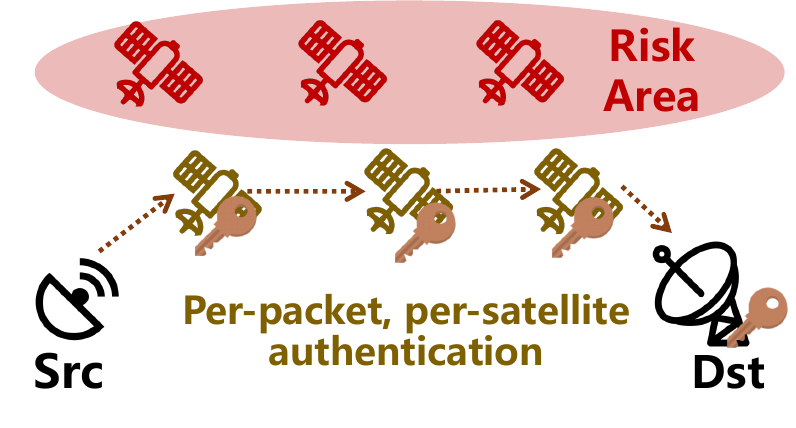}}
		\end{minipage}%
		
		\label{fig:crypto}
	}
	\hfill
	\subfloat[Delay-based verification: path fluctuations result in limited accuracy.]{
		\begin{minipage}[t]{0.46\linewidth}
			\centering{
				\includegraphics[width=\textwidth]{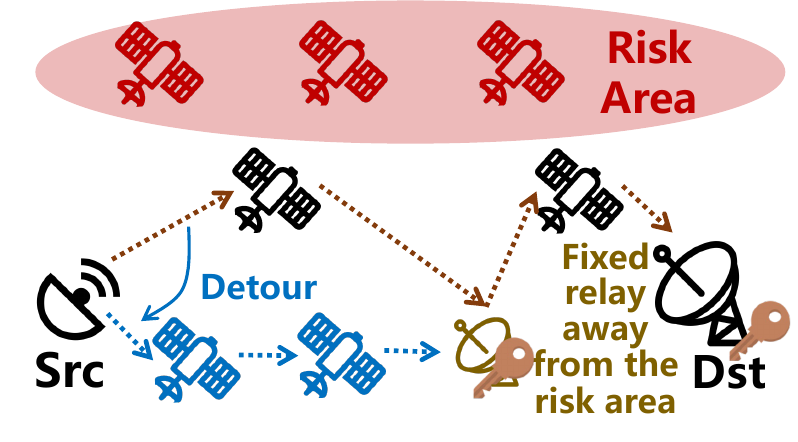}}
		\end{minipage}%
		\label{fig:delaybased}
	}
	\\
	\subfloat[\name performs crypto-based verification on relays, and extends existing delay-based verification approach to verify each segment of the path.]{
		\begin{minipage}[t]{0.99999\linewidth}
			\centering{
				\vspace{-0.1in}
				\includegraphics[width=\textwidth]{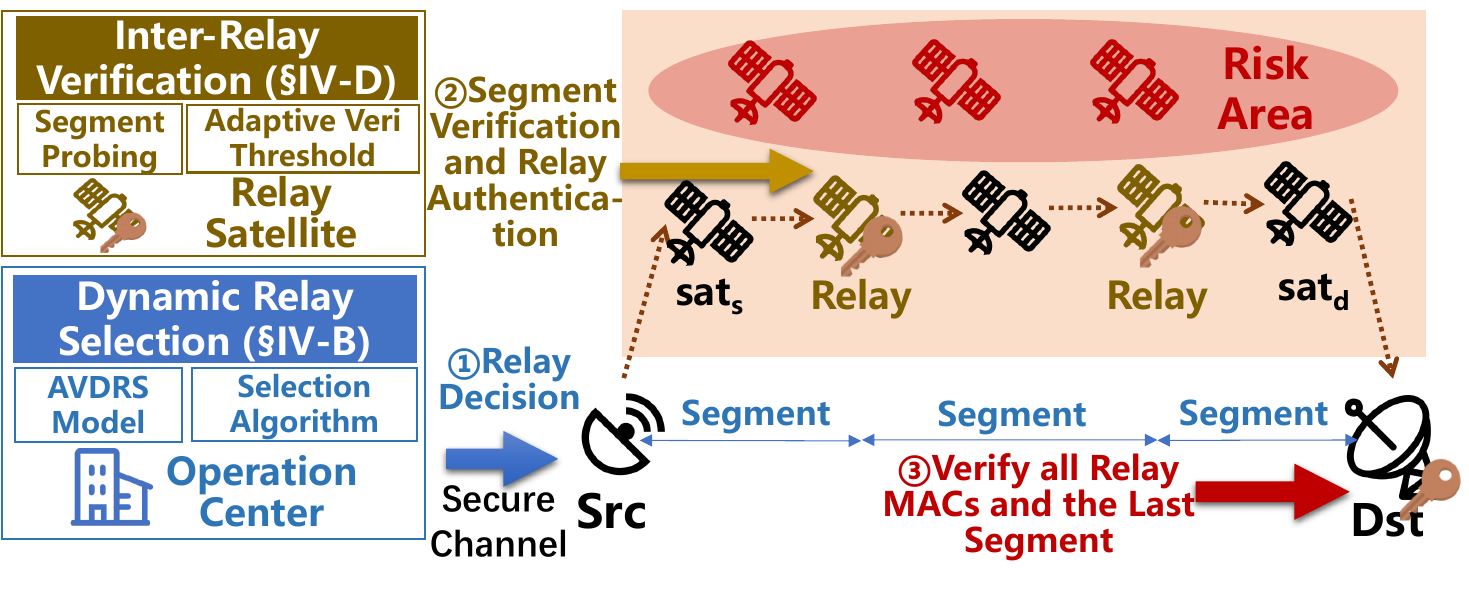}}
			\vspace{-0.2in}
			\label{fig:starveri_archi}
	 	\end{minipage}%
	}
	\caption{\name architecture and an illustration of the key differences from previous verification approaches.} 
	\vspace{-0.25in}
	\label{fig:architecture_and_comparison}
\end{figure}

\subsection{\name verification workflow}

For each communication pair ($Src$, $Dst$) and a given risk area, \name accomplishes the verification as follows.

\noindent
\textbf{(1) Dynamic relay selection.} \name first calculates a time-varying relay set which includes all relays for ($Src$, $Dst$) at different time slots. Then it estimates the segment detour delay threshold which is used to calculate the segment delay upper bound based on the relay sequence. All the information is computed based on the predictable satellite trajectory and dynamic LSN topology. Once decided, the SNO operation center delivers the results to $Src$.


\noindent
\textbf{(2) Intra-segment state probing.} If a path changes or a new session begins, end users, relays and SNO negotiate session keys \cite{SIGCOMM14-pv} which is out of our scope in secure channels \cite{patent-deploy-starlink}. After that, each relay~(including $sat_d$) periodically~(\eg\,tens of seconds) sends a probing packet to its previous relay to probe delay of the current segment. To ensure that probing packets are not tampered with, all nodes on a segment path authenticate the probing packet with adjacent shared symmetric keys \cite{SIGCOMM15-pv}.

\noindent
\textbf{(3) Authentication fields initiation.} Before a packet's departure, $Src$ inserts the corrected sending timestamp, hash value, and relay authentication fields $AUTH$ in the packet header.

\noindent
\textbf{(4) Relay-based segment verification.} Network nodes forward the packets according to the planned path. When a relay receives packets, it verifies the segment path based on its probing ground truth and detour threshold. If packets do not traverse the risk area, the relay computes a MAC value and inserts it with the timestamp to update its part in $AUTH$ field.

\noindent
\textbf{(5) Destination verification.} After receiving a packet, $Dst$ verifies the packet's source (out of our scope), $AUTH$ fields updated by relays, and the last segment path to determine whether the packet bypasses the risk area.

\subsection{Technical challenges}
\label{subsec:technical_challenges}

To accomplish efficient and accurate verification, \name needs to solve the following LSN-specific challenges.

\noindent
\textbf{Appropriate relay selection under LEO dynamics.} Considering the high dynamics in LSNs, correctly and timely selecting verifiable relays away from the risk area while not involving too much additional delay at a global scale is challenging. Specifically, we formulate the \emph{Avoidance-Verifiable Dynamic Relay Selection~(AVDRS)} problem in LSNs, based on which we further design an efficient solution.

\noindent
\textbf{Unpredictable delay jitters.} Only exploiting the period probing delay ground truth as an upper bound to verify whether a packet has gone through the risk area is too strict. It's impossible that the real delays are equal to the probing delays exactly. It's hard to distinguish whether an observed micro delay increase is caused by slight delay jitters or detours which may mislead the segment delay-based verification. 

In the next section, we introduce the details of \name and describe how \name addresses these challenges.


\section{\name Design Details}
\label{sec:design_details}


\subsection{Modeling an LSN}
\label{sec:models}
\noindent
\textbf{Network model.} To model the LEO dynamics, we divide time intervals into discrete time snapshots $\mathcal{T}=\{t_1,t_2,...\}$. During $(t_i,t_{i+1})$, the LSN topology and distance between neighboring satellites are nearly constant. The topology at $t$ is represented by a graph $\mathcal{G}^{\,t}=(\mathcal{V},\mathcal{L}^{\,t})$~($\mathcal{V}$ is the union set of satellites $\mathcal{V_S}$, and ground entities $\mathcal{V_E}$). $\mathcal{L}^{\,t}=\{l_{ij}^{\,t}| i,j\in \mathcal{V}\}$ is the link set of ISLs and GSLs. If node $i, j$ are connected at $t$, the $0\text{-}1$ variable $l_{ij}^{\,t}=1$. The orbit can be represented by a number $p$~($p<\mathcal{P}$, where $\mathcal{P}$ is the amount of orbits). Similarly, we use $n$~($n<\mathcal{H}$, where $\mathcal{H}$ is the number of satellites per orbit) to describe the in-orbit position of a satellite. Formally, a satellite $sat_i$ can be represented by a unique logical coordinate ${sat}_i=(p_i,n_i)$. 

\noindent
\textbf{Link establishment and path construction.} At time $t$, the SNO will compute the risk satellites, \ie\,$\mathcal{RS}^{\,t}=\{{rs}_1,{rs}_2,...\}^{\,t}$ in the risk area $RA$. We set ${path}_{ab}^{\,t}$ as the path at $t$ between nodes $a,b\in\mathcal{V}$ consisting of a sequence of on-path nodes. The segment path between adjacent nodes in ${sat}_s^{\,t}\cup \mathcal{RE}^{\,t}\cup{sat}_d^{\,t}$ is the shortest routing path computed by \emph{Dijkstra} ($\mathcal{RE}^{\,t}$ is the satellite relay set, $\mathcal{RE}^{\,t}=\{r_1,r_2,...\}^{\,t}\subseteq \mathcal{V_S}\setminus (\mathcal{RS}^{\,t}\cup sat_s^{\,t}\cup sat_d^{\,t})$). As the chosen relays are in order, we use a sequence set $\mathcal{Y}^{\,t}=\{y_1,y_2,...\}^{\,t}$ to describe the relation in a path, and $y_i<y_j$ means node $i$ precedes node $j$. The active link between adjacent nodes $i,j\in {path}_{ab}^{\,t}$ is $\tilde{l}_{ij}^{\,t}\le {l}_{ij}^{\,t}$. The delay between any two neighboring nodes $d_{ij}^{\,t}$ can be estimated by dividing their straight-line distance by the light speed. $D_{ab}^{\,t}$ is the total delay of ${path}_{ab}^{\,t}$. Since there exist several paths between two nodes with similar delays, we use a set $\mathcal{EP}_{ab}^{\,t}$ to contain all the related nodes in these paths. 


\subsection{Dynamic relay selection}
\label{subsec:dynamic_relay_selection}

\noindent
\paragraph{AVDRS Formulation}
considering LSN topology's frequent variation, to ease the burden brought by global users' frequent requests, SNO should reduce communication frequency by pre-computing results during a future period for satellite terminals (\eg\, dish) to store them. It's vital to ensure a short end-to-end delay, which limits us from selecting remote relays. We formulate the \emph{Avoidance-Verifiable Dynamic Relay Selection~(AVDRS)} problem which aims at selecting viable satellite relays while satisfying low delay inflation.

\noindent
\textbf{Input:} the topology $\mathcal{L}^{\,t}$, access satellite pairs $sat_s^{\,t}$, $sat_d^{\,t}$ of $Src$ and $Dst$ respectively, and the risk satellites $\mathcal{RS}^{\,t}$. 

\noindent
\textbf{Output:} the selected relay sequence $\mathcal{RE}^{\,t}$. 

\noindent
\textbf{Goals:} \name targets at obtaining the relays while minimizing the end-to-end delay between $Src$ to $Dst$ at $t$:
\begin{equation}
\min{D_{sd}^{\,t}=\sum_{i,j\in{path}_{sd}^{\,t},y_i+1=y_j}{\tilde{l}_{ij}^{\,t}\cdot d_{ij}^{\,t}}} 		
\label{obj}
\end{equation}

\noindent
\textbf{Constraints:} Constraint~\ref{st1} describes the node sequence of a path. $Dis^{\,t}(i,j)$ is the distance between an on-path node and a risk satellite and it limits the candidate relays' range. The larger the $\theta$ is, the further the relay is, and it's more secure while incurring longer end-to-end delay. Constraint~\ref{st3} depicts the connectivity of the constructed path. Constraint~\ref{st4} ensures no equal paths can traverse $RA$. Every relay needs to authenticate the packet and the related field length increases as its number $\sigma$ grows. So the last constraint limits the number of chosen relays to lower the communication overhead. 
Intuitively, more relays mean finer verification granularity, \eg\,all the on-path nodes authenticate. However, from our analysis, a bad relay may incur a longer delay and not contribute to the verification accuracy. 
\begin{equation}\label{st1}
y_i<y_j,\ i,j\in{path}_{sd}^{\,t},\ \mathrm{if}\ i\ \mathrm{precedes}\ j
\end{equation}
\begin{equation}\label{st2}
\min_{i\in{path}_{sd}^{\,t},j\in\mathcal{RS}^{\,t}}{Dis^{\,t}(i,j)}\ge \theta
\end{equation}
\begin{equation}\label{st3}
\tilde{l}_{ij}^{\,t}\le {l}_{ij}^{\,t},\ i,j\in{path}_{sd}^{\,t},\ y_i+1=y_j
\end{equation}
\begin{equation}\label{st4}
\mathcal{EP}_{ij}^{\,t}\cap \mathcal{RS}^{\,t}=\varnothing,\ i,j\in{sat}_s^{\,t}\cup\mathcal{RE}^{\,t}\cup{sat}_d^{\,t}
\end{equation}
\begin{equation}\label{st5}
|\mathcal{RE}^{\,t}|\le \sigma
\end{equation}

\noindent
\textbf{Problem analysis.} To solve the problem above, SNO should calculate the optimal solution in each time slot for large amounts of users. Since the available relay amount for LSNs is $\psi\sim{10}^{3}$, the combination number is~$O(A_{\psi}^{\sigma}$) at a time~(the sequence is in order), which cannot be calculated in a short time, let alone for millions of global users. Besides, to satisfy Constraints~\ref{st1} and \ref{st4}, the problem is more complex as it can be analogized to NP-hard Hamiltonian Path Problem~(HPP)~\cite{np-hard}.

\begin{corollary}
	AVDRS problem can be analogized to HPP.
\end{corollary}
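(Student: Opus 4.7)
The plan is to exhibit a polynomial-time reduction from HPP to a single-snapshot instance of AVDRS, so that feasibility in AVDRS encodes the existence of a Hamiltonian path. Given an HPP instance consisting of an undirected graph $G=(V,E)$ and designated endpoints $s,t\in V$, I would build an LSN whose satellite set $\mathcal{V_S}=V$ and whose link set $\mathcal{L}$ realizes every edge of $G$ as an ISL (so $l_{ij}=1$ iff $(i,j)\in E$), with each ISL assigned unit propagation delay $d_{ij}=1$. I would take $sat_s=s$, $sat_d=t$, set $\mathcal{RS}^{\,t}=\varnothing$ (trivializing Constraint~\ref{st2} for any $\theta\ge 0$), and fix the relay budget at $\sigma=|V|-2$.

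For the forward direction, any Hamiltonian path $s=v_0,v_1,\ldots,v_{|V|-1}=t$ in $G$ yields a feasible AVDRS solution by choosing $\mathcal{RE}^{\,t}=(v_1,v_2,\ldots,v_{|V|-2})$ in that order: each segment between consecutive chosen relays collapses to a single ISL, so Constraints~\ref{st1}, \ref{st3}, and \ref{st5} are immediate, Constraint~\ref{st4} holds vacuously because $\mathcal{RS}^{\,t}=\varnothing$, and the objective (\ref{obj}) attains $|V|-1$ unit delays. For the converse, any feasible AVDRS solution with $|\mathcal{RE}^{\,t}|=|V|-2$ produces $|V|-1$ Dijkstra segments on the unit-weight graph; since the resulting $s$-to-$t$ walk uses at least $|V|-1$ edges and the chosen relays together with the endpoints already enumerate $|V|$ distinct nodes, the walk must coincide with a Hamiltonian path in $G$ (any segment that reused an intermediate vertex could be shortened, contradicting either shortest-path optimality or the cardinality budget $\sigma$).

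The main obstacle lies in ruling out vertex reuse across distinct segments: the Dijkstra-based segment definition combined with Constraint~\ref{st4} only forbids each segment from touching $\mathcal{RS}^{\,t}$ and does not, as written, forbid two distinct segments from sharing an intermediate satellite. I would close this gap either by a mild strengthening of Constraint~\ref{st4} to enforce pairwise segment disjointness, consistent with the paper's intent of uniquely authenticating each hop, or by a small weight-perturbation gadget on $G$ that forces every Hamiltonian $s$-$t$ walk to be the unique shortest-cost walk of its length. With this adjustment in place, the construction is clearly polynomial in $|V|$, so HPP reduces to AVDRS and the claimed analogy with HPP follows.
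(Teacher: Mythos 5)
Your reduction runs in the opposite direction from the paper's: the paper maps an AVDRS instance onto an HPP instance by building an auxiliary graph on the candidate relays (setting $l_{ij}^{\,t}=1$ whenever $\mathcal{EP}_{ij}^{\,t}\cap\mathcal{RS}^{\,t}=\varnothing$ for $i,j\in sat_s^{\,t}\cup\mathcal{RE}^{\,t}\cup sat_d^{\,t}$), so that ordering the relays becomes finding a Hamiltonian path in that graph. You instead attempt a Karp-style reduction from HPP to AVDRS, which would be the right direction if the goal were to establish NP-hardness rather than the literal ``analogy'' the corollary states. However, your construction has a fatal gap beyond the vertex-reuse issue you flag.

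The gap is that Constraint~\ref{st5} is an upper bound, $|\mathcal{RE}^{\,t}|\le\sigma$, not an equality, and the objective~(\ref{obj}) is a delay minimization. Nothing in AVDRS forces a solution to use all $|V|-2$ intermediate vertices as relays; with $\mathcal{RS}^{\,t}=\varnothing$ every constraint is already satisfied by $\mathcal{RE}^{\,t}=\varnothing$ together with the direct shortest $s$--$t$ path, which is feasible and optimal whether or not $G$ has a Hamiltonian path. The natural decision version (``is the minimum of~(\ref{obj}) at most $K$?'') likewise reduces to $\mathrm{dist}_G(s,t)\le K$, again independent of Hamiltonicity, so your instance cannot distinguish yes- and no-instances of HPP. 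Restricting attention to ``feasible solutions with $|\mathcal{RE}^{\,t}|=|V|-2$'' silently changes the problem, as does your proposed strengthening of Constraint~\ref{st4} to pairwise segment disjointness: a reduction must target AVDRS as defined. Repairing this would require gadgets (e.g., placing risk satellites so that Constraints~\ref{st2} and~\ref{st4} can only be met by relay sequences visiting every vertex) that force maximal relay usage, which is essentially a different proof. Note finally that the corollary claims only an analogy, which the paper supports by the much weaker (and itself informal) AVDRS-to-HPP mapping described above.
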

\begin{proof}
HPP specifies a set of points and finds a path in the graph that visits every point only once. We set an unordered $\mathcal{RE}^{\,t}$ and a graph where if the closed $\mathcal{EP}_{ij}^{\,t}\cap\mathcal{RS}^{\,t}=\varnothing,\,i,j\in{sat}_s^{\,t}\cup\mathcal{RE}^{\,t}\cup{sat}_d^{\,t}$, then $l_{ij}^{\,t}=1$. The problem is reduced to finding a sequence that can traverse every relay only once. 
\end{proof}
\paragraph{Dynamic Relay Selection Algorithm~(DRSA)}
\label{sebsec:alg}
The details of DRSA are shown in Alg.\ref{alg:DRSA}. DRSA limits $\mathcal{RS}^{\,t}$ within a range to ensure that as many close low-risk satellites outside the range as possible can be candidate relays. 
\begin{figure}[tb]
	\centering
	\includegraphics[width=0.94\linewidth]{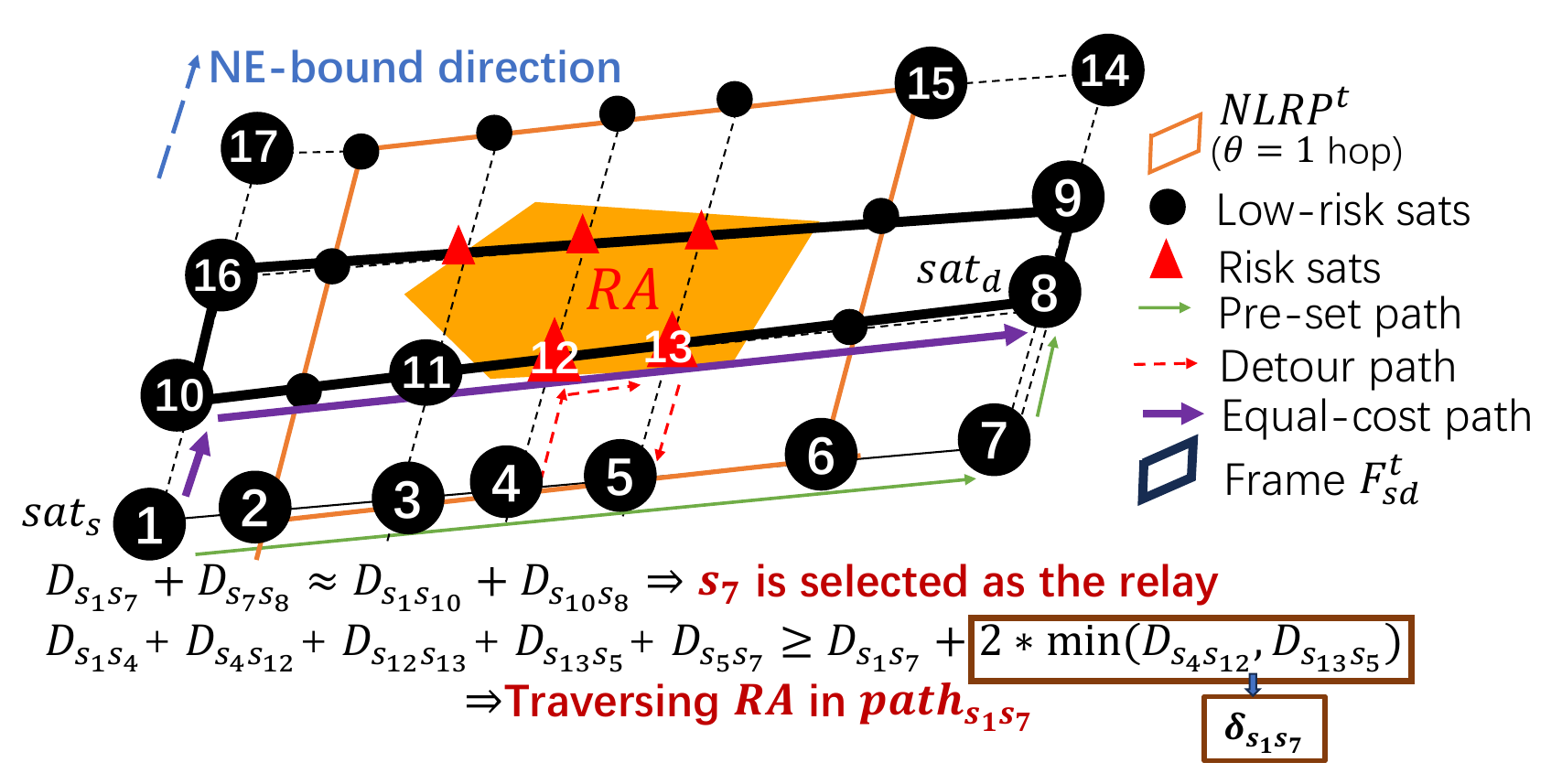}
	\caption{Illustration of DRSA workflow.}
	\vspace{-0.25in}
	\label{fig:nlrp}
\end{figure}

\noindent
\textbf{Calculating the Nearest Low-Risk Planes~(NLRP).}
$NLRP$ consists of eight low-risk planes, \ie\,four planes in each orbital direction. NE-bound direction is that satellites fly from Southwest to Northeast, and SE-bound is from Northwest to Southeast~\cite{starlink_map}. $NLRP$ limits the risk satellites in a range~(it is adjusted by $\theta$ in Constraint~\ref{st2}, which helps set delay buffers against unpredictable delays according to current network conditions. If $\theta$ is defined as the number of hops, $NLRP$ is $\theta$-hop away from the outermost edge risk satellites). Then the upmost, bottommost, leftmost, and rightmost low-risk planes in two orbital directions respectively can be obtained
		\begin{equation}
			NLRP=\{p_{l}^{\,NE},p_{r}^{\,NE},n_{u}^{\,NE},n_{b}^{\,NE},p_{l}^{\,SE},p_{r}^{\,SE},n_{u}^{\,SE},n_{b}^{\,SE}\}\label{eq:NLRP}
		\end{equation}
where, $l, r, u, b$ represent the left, right, up, and bottom, and $NE, SE$ means the orbital direction (Fig.\ref{fig:nlrp} only shows NE-bound $NLRP$ and SE-bound $NLRP$ is a closed frame in different direction interwoven with NE-bound $NLRP$). 


\noindent
\textbf{Dynamic Relay Selection Algorithm (DRSA).} 
Then DRSA selects relays as shown in Alg.\ref{alg:DRSA}. We discuss it in two cases: $dir({sat}_s^{\,t})=dir({sat}_d^{\,t})$ and $dir({sat}_s^{\,t})\ne dir({sat}_d^{\,t})$, where $dir(*)$ is the satellite direction.

When $dir({sat}_s^{\,t})=dir({sat}_d^{\,t})$, firstly a frame $F^{t}_{sd}$ formed by the planes of ${sat}_s^{\,t}$ and ${sat}_d^{\,t}$ is obtained as seen in Fig.~\ref{fig:nlrp}. 
Next, DRSA sets $\mathcal{RE}^{\,t}$ in which a relay $r_i$ is located in the same plane of the previous and the following nodes in $sat_s^{\,t}\cup\mathcal{RE}^{\,t}\cup sat_d^{\,t}$ to make sure there are no other equal-delay segment paths theoretically. Setting relays in this way can reduce the risk that other paths with similar delays traversing $RA$ affect the verification accuracy. If $F_{sd}^{\,t}$ and ${NLRP}^{\,t}$ do not overlap (Line 3-4 in Alg.\ref{alg:DRSA}), \eg\,$F_{s_6s_{14}}^{t}$ formed by $(s_6,s_7,s_{14},s_{15})$ and $NLRP^{\,t}$ in Fig.~\ref{fig:nlrp}, all the nodes constructing possible equal-cost shortest paths from ${s}_6$ to ${s}_{14}$ are contained in $F_{s_6s_{14}}^{\,t}$. In this case, $s_7$ is the only relay since it's farther away from $NLRP^{\,t}$. Otherwise, there are two types of overlap: \emph{semi-overlap} seen in Line 5-7, (\eg\,$F_{s_{1}s_{8}}^{\,t}$ in Fig.~\ref{fig:nlrp}) and \emph{fully-overlap} seen in Line 8-11, (\eg\,$F_{s_{10}s_{9}}^{\,t}$ in Fig.~\ref{fig:nlrp}). In semi-overlap, there exists a candidate path that does not traverse $RA$ while achieving the shortest delay and DRSA selects this corner node as the only relay, \ie\,the equal-cost path depicted by the thicker purple solid arrows traverses $RA$, so $s_7$ is the relay and the shortest path is the path depicted by the finer green solid arrows. In the second case where all the candidate paths contain risk satellites, DRSA enlarges the $F_{sd}^{\,t}$ and selects the corner nodes that are located on the intersection of a certain border of $NLRP^{\,t}$ and the planes of $sat_s^{\,t}$, $sat_d^{\,t}$ as relays, \ie\,$\mathcal{RE}^{\,t}=(s_1,s_7)$.

\begin{figure}[tb]
	\centering
	\setlength{\abovecaptionskip}{-1pt}
	\includegraphics[width=0.9\linewidth]{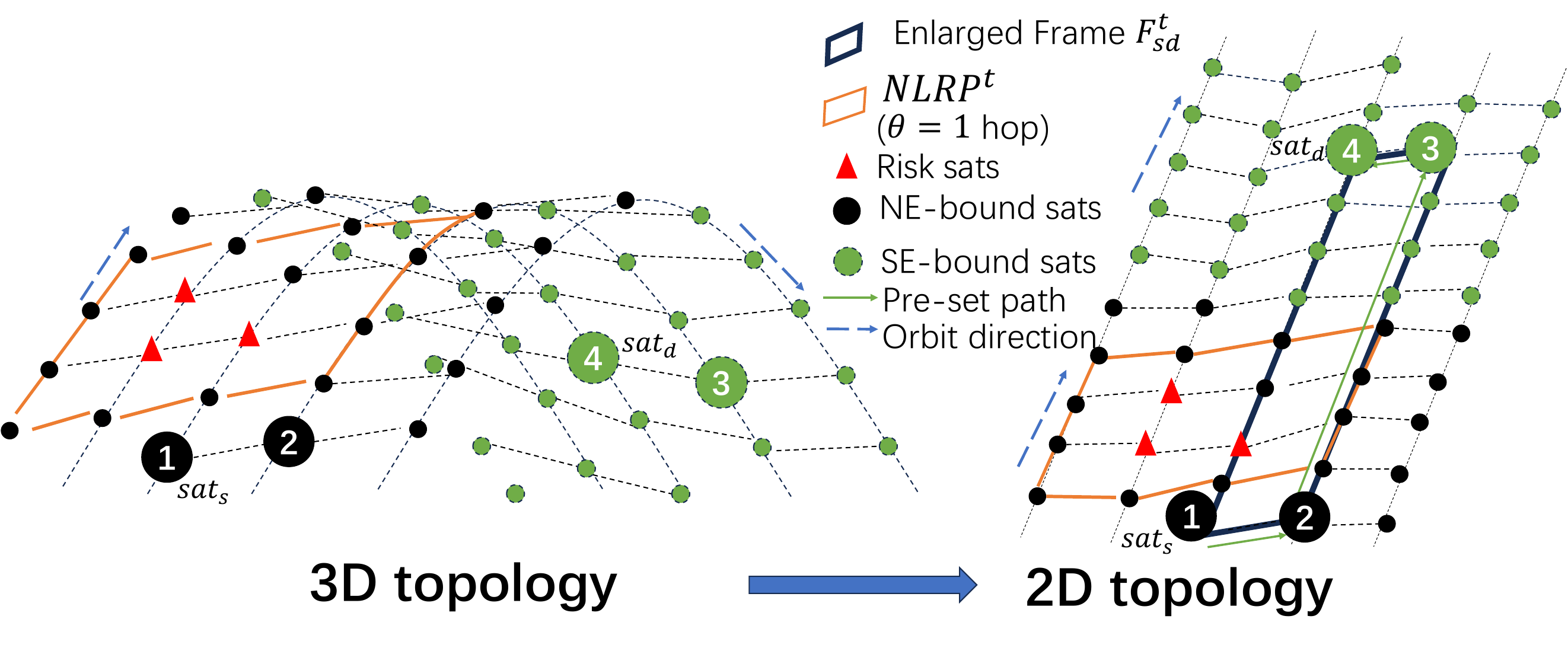}
	\caption{${sat}_s$ and ${sat}_d$ are in different directions.}
	\label{fig:diff_direction_demo}
\end{figure}


If $dir({sat}_s^{\,t})\ne dir({sat}_d^{\,t})$, it can also use the same-direction way above. As shown in Fig.~\ref{fig:diff_direction_demo}, we approximately convert 3-D into 2-D topology and determine the intersection relationship between $F_{s_1s_4}^{\,t}$ and ${NLRP}^{\,t}$. For example, Fig.~\ref{fig:diff_direction_demo} plots the topology transformation and the chosen relays $(s_2,s_3)$ which are the corner nodes in this case.

\noindent
\textbf{Setting detour threshold}. As mentioned in \S\ref{sec:key ideas}, \name divides a path into several segments and each relay or $Dst$ authenticates one segment. The real delays vary dynamically, nearly fluctuating within a certain range. As each segment path is the shortest path, we assume no attacks will incur a lower delay than the probing delay during a period, so \emph{what is the delay upper bound for a packet to be considered as not traversing $RA$?} If an on-path node mis-forwards the packet to $RA$, the minimum detour delay is more than twice the minimum propagation delay between any on-path node and any risk satellite like $\delta_{s_1s_7}$ in Fig.~\ref{fig:nlrp}. Therefore, SNO sets a segment detour delay threshold $\delta_{ij}^{\,t}$ between any two adjacent nodes $i,j$ constructing a segment as:
\begin{equation}\label{eq:detour threshold}\begin{aligned}
\delta_{ij}^{\,t}= \min_{a\in{path}_{ij}^{\,t},b\in\mathcal{RS}^{\,t}}{2\cdot D_{ab}^{\,t}},\\
i,j\in{sat}_s^{\,t}\cup\mathcal{RS}^{\,t}\cup{sat}_d^{\,t},\,y_i<y_j
\end{aligned}
\end{equation}
\setlength{\textfloatsep}{0.12cm}
\begin{algorithm}[t]
	\renewcommand{\algorithmicrequire}{\textbf{Input:}}
	\renewcommand{\algorithmicensure}{\textbf{Output:}}
	\caption{DRSA: Dynamic Relay Selection Algorithm}
	\label{alg:DRSA}
	\begin{algorithmic}[1]
		\REQUIRE $sat_s^{\,t}=(p_s,n_s)$, $sat_d^{\,t}=(p_d,n_d)$, $RA$
		\ENSURE $\mathcal{RE}^{\,t},\Delta^{\,t}$
		\STATE \small{\textbf{get} ${NLRP}^{\,t}$,$F_{sd}^{\,t}$,$\mathcal{RE}^{\,t}=[],\Delta^{\,t}=[]$--detour threshold list}
		\STATE $r_*=(p_s,n_d)$ or $(p_d,n_s)$
		\IF {$F_{sd}^{\,t}\cap {NLRP}^{\,t}==\varnothing$}
			\STATE $\Delta^{\,t} .add(\delta_{sr_*}, \delta_{r_*d})$, $\mathcal{RE}^{\,t}.add(r_*)$
		\ELSIF{$F_{sd}^{\,t}\ and\ {NLRP}^{\,t}\ is\ semi\text{-}overlap$}
			
			\STATE ${path}_{sd}^{\,t}=Dijkstra(sat_s^{\,t},r_*)\cup Dijkstra(r_*,sat_d^{\,t})$
			\STATE $\mathcal{RE}^{\,t}.add(r_*)$, $\Delta^{\,t} .add(\delta_{sr_*},\delta_{r_*d})$ where $r_*\in path_{sd}^{\,t}$ satisfies ${path}_{sd}^{\,t}\cap \mathcal{RS}^{\,t}==\varnothing$

		\ELSIF{$F_{sd}^{\,t}\ and\ {NLRP}^{\,t}\ is\ fully\text{-}overlap$}
			\STATE \textbf{get} $Frame=F_{sd}^{\,t}\cup{NLRP}^{\,t}$
			\STATE \textbf{get} two corner nodes as relays $r_1,r_2$, 
			\STATE $\mathcal{RE}^{\,t}.add(r_1,r_2)$, $\Delta^{\,t}.add(\delta_{sr_1},\delta_{r_1r_2},\delta_{r_2d})$
		\ELSE \RETURN Error /*\eg\,$sat_s^{\,t},sat_d^{\,t}\in \mathcal{RS}^{\,t}$*/
		\ENDIF
		\RETURN $\mathcal{RE}^{\,t},\Delta^{\,t}$
	\end{algorithmic}
\end{algorithm}

\noindent
\textbf{Discussion of number of relays $\sigma$.}
Taking ${sat}_s=s_1,{sat}_d=s_7$ in Fig.~\ref{fig:nlrp} as an example: (i) if we select a bad relay sequence like $(s_{11},s_{3})$, not only is path longer but the relay $s_{11}$ is closer to the risk satellite $s_{12}$, increasing verification inaccuracy; (ii) if we set $s_4$ as the only relay, the verification granularity seems finer as the path is divided into $s_1\to s_4$ and $s_4\to s_7$. But the gain is trivial since the detour threshold in such two segment paths is similar, \ie\ $2\cdot D_{s_{4}s_{12}}\approx 2\cdot D_{s_{5}s_{13}}$. Particularly, no relay is needed if $sat_s$ and $sat_d$ are in the same plane and not in the \emph{fully-overlap} case (\eg\,$s_1$ and $s_7$ are in the same plane). To sum up, DRSA minimizes $\sigma$ to a large extent by constructing $NLRP$ and setting at most two relays. 
\subsection{Pre-process at $Src$}
Before communication begins, $Src$ and SNO exchange \emph{Veri infos} including the detour threshold list $\Delta$, and relays during a future period. Every time a connection begins or relays change (inspected by querying its local cache in each time slot), $Src$, $Dst$, and relays negotiate session keys and achieve the path authorization \cite{SIGCOMM14-pv} (out of our scope) through secure channels between them and SNO \cite{patent-deploy-starlink}. Then, when sending a packet, $Src$ first sends a probe to test the current access delay $D_{access}$ to eliminate the error brought by the first hop. Next, $Src$ records the corrected sending timestamp ${ts}_0={ts}_{send}+D_{access}+\alpha$ ($\alpha$ is processing time) and calculates the truncation of the hash value $HASH=H(Payload||PATH||\Delta^{\,ts_0}||ts_0||\mathcal{RE}^{\,ts_0})[0:l]$, where $[0:l]$ is the truncation of the lowest $l$B of the data to improve goodput~\cite{sec20-pv,ToN23-pv}. Then $Src$ constructs the reserved $AUTH$ chain, makes a final packet source authorization and sends $pkt$. Each relay's $AUTH$ field is shown in Fig.~\ref{fig:auth fields} where $l=4$, including relay's ID $r_i$,  segment detour threshold $\delta_{ij}$, timestamp $ts_i$ when sending out the packet, and its MAC authentication value.
\begin{algorithm}[t]
	\renewcommand{\algorithmicrequire}{\textbf{Input:}}
	\renewcommand{\algorithmicensure}{\textbf{Output:}}
	\caption{Verification Processes}
	\label{alg:process}
	\begin{algorithmic}[1]
	\STATE \emph{\textbf{Step (I): Pre-processing at $Src$}}
		\STATE Probe access delay $D_{access}$
		\STATE  \small{$ts_0 = ts_{send} + D_{access}+\alpha$ /*correct the sending timestamp*/}
		\STATE Perform $HASH$, inquire $PATH$ at $ts_0$
		\STATE Construct $AUTH$ fields and send $pkt$
	\STATE \emph{\textbf{Step (II): Relay-based segment verification}}
		\STATE \textbf{get} $PATH$, $AUTH$, $\mathcal{RE}$ from $pkt$
		\IF {Current ${sat}_i \notin \mathcal{RE}\cup {sat}_d$}
			\STATE Forward $pkt$ or process probing packets
		\ELSIF {Current ${sat}_i \in \mathcal{RE}$}
			\STATE Probe $dt_{r_{i-1}r_i}$ periodically
			\STATE \small{Check if ${AUTH}_{i-1}$ has been updated \textbf{(1)}}
			\STATE \small{Check if $ts_i - ts_{i-1} \le \delta_{r_{i-1}r_i}+dt_{r_{i-1}r_i}$ \textbf{(2)}}
			\IF {(1) and (2) are satisfied}
				\STATE Perform $MAC_{K_{r_i}}[0:l]$,\, update ${AUTH}_i$, forward $pkt$
			
			\ENDIF
			
		\ELSE[/*current $sat_i$ is the dst sat*/]
			\STATE Insert $ts_d$, forward $pkt$ and probe the last segment delay periodically
		\ENDIF
	\STATE \emph{\textbf{Step~(III): Final verification at $Dst$}}
	\STATE \textbf{get} $AUTH$ from $pkt$
	\STATE Verify $pkt$'s source authorization \textbf{(1)}
	\STATE Verify $HASH$ \textbf{(2)}
	\STATE Authenticate MAC values in $AUTH$ \textbf{(3)}
	\STATE \textbf{get} the last relay's ${ts}_{n}\in AUTH_n$, $\delta_{r_{n}d}$
	\STATE Check if $ts_d - ts_n \le \delta_{r_{n}{d}}+dt_{r_{n}d}$ \textbf{(4)}
	\IF {(1)-(4) are all satisfied}
		\STATE Accept $pkt$
	\ELSE
		\STATE Drop $pkt$
	\ENDIF
	\end{algorithmic}
\end{algorithm}
\vspace{-0.1in}
\subsection{Relay-based segment verification}
When receiving a packet, the satellite should determine if it is a relay by checking the $AUTH$ chain. If not, it just forwards it without doing any operations. 

Otherwise, after negotiating the session keys and obtaining the segment path and relay sequence, periodically, the relay probes with a nonce to its previous relay for some link states like queuing and processing delay. The probing reply packets must have been authenticated hop by hop on this segment path through the neighboring shared symmetric keys~\cite{SIGCOMM15-pv,IWQoS18-pv}. Then, the relay obtains the probing delay $dt_{r_{i-1}r_i}$ of the segment path. Similarly, $sat_d$ announces the last segment delay to $Dst$.

Next, upon receiving normal communication packets, the relay $r_i$ just checks if its previous relay $r_{i-1}$ has updated ${AUTH}_{i-1}$ but ignores its validation as \name offloads the step to $Dst$. Then, it determines whether the real delay of this segment path has exceeded the upper bound, \ie\, 
\vspace{-0.07in}
\begin{equation}
\begin{cases}
{ts}_i-{ts}_{i-1}>\delta_{r_{i-1}r_i}+dt_{r_{i-1}r_i},  & \text{ drop the packet }\\
\text{ otherwise, }  & \text{ update}\ AUTH_i
\end{cases}
\end{equation}
If not, it calculates ${MAC}_{K_{r_i}}(HASH||ts_i||r_i)[0:l]$ and inserts it into the packet together with $ts_i$ and sends the packet.
If all the intermediate relays have updated the related fields and the packet is forwarded to ${sat}_d$, ${sat}_d$ inserts the time $ts_d$ when it receives the packet and forwards to $Dst$. 
\subsection{Final path verification at $Dst$}
When $Dst$ receives the packet, it verifies the packet in the following steps: (i) it first authenticates the packet's source validation (this is out of our scope); (ii) after that, it calculates the hash value in the same way as $Src$ to verify if the packet has been tampered; (iii) then it authenticates the MAC chain of relays by re-computing MACs with their shared session keys to verify that the real path has indeed gone through these relays and passed all these relay's verifications; (iv) lastly, $Dst$ authenticates the last segment path in the same way as relay‘s. If all the four conditions are satisfied, $Dst$ accepts it and considers it not traversing the $RA$.

\section{Security Analysis}
\noindent
\textbf{Defense against hijacking.} When benign nodes outside $RA$ mis-forward packets into $RA$ and the packets are redirected to other colluded evil nodes without modifications, the packets are ultimately forwarded to at least a benign node, \eg\ ${sat}_d$. This node will forward them to the next node according to the pre-set $PATH$. Any path inconsistency will incur an extra delay that exceeds the pre-set segment delay upper bounds.

\noindent
\textbf{Defense against path counterfeit and modification.} Some evil nodes may collude with each other by modifying the $PATH$ sequence or the next relay when the packets have entered the $RA$. In the former attack, although the real path is different, $Dst$ will detect it by authenticating the source and recomputing the packet's $HASH$ of the planned path; while in the latter case, a packet may skip some nodes. If skipping some normal nodes on a segment path, the segment delay will be prolonged as mentioned above; and if skipping relays, the downstream relays or $Dst$ will drop the packet as the previous one does not update its $AUTH$ field. If some colluded satellites forge timestamps and MACs, $Dst$ will detect it by re-computing the real relay's MAC chain.

\noindent
\textbf{Defense against replay}. The $HASH$ value contains the timestamp when sending the packet to ensure freshness. Adversaries do not know the session keys, so they cannot forge the source of a packet. Besides, the expired packets will not pass the delay-based verification.

\section{Performance Evaluation}
\label{sec:evaluation}

\begin{figure}[tb]
	\centering
	\includegraphics[height=0.3\linewidth]{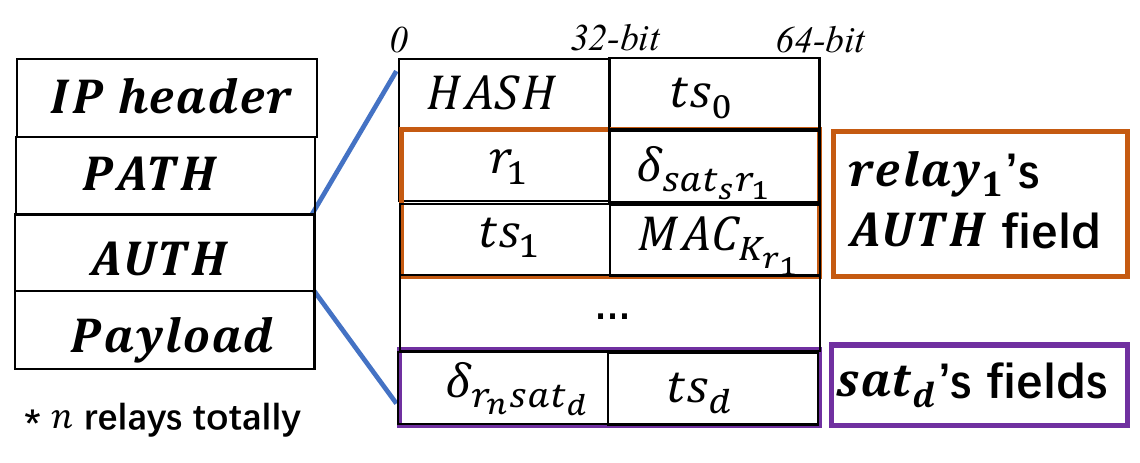}
	\caption{The $AUTH$ header structure of \name.}
	\label{fig:auth fields}
\end{figure}

Our evaluation in this section focuses on the following aspects related to \name: (i) \textbf{Q1:} can \name achieve high accuracy for network path verification in dynamic LSNs? (ii) \textbf{Q2:} can \name's verifiable risk-avoidance routing be exempt from severe delay penalty? (iii) \textbf{Q3:} does \name involve acceptable overhead on satellite routers? and (iv) \textbf{Q4:} can \name scale to large-scale LSNs?

\subsection{Experiment setup}
\noindent
\textbf{Prototype and testbed setups.} 
\name prototype has two major components: \textbf{(i) \name's controller.} \name controller is implemented on a simulation testbed based on StarryNet~\cite{StarryNet}, a novel docker-based framework to simulate large-scaled LSNs written in Python. The controller reads the satellite location data produced by StarryNet first. Then it invokes the \emph{DRSA module} to calculate the risk satellites and $NLRP$s based on the self-defined risk area. After that, the controller calculates the dynamic relays periodically, obtains the segment detour delay thresholds, and constructs the complete routing path for each communication city pair. \textbf{(ii) Satellite routers and city pairs.} They are simulated as individual containers with a complete TCP/IP stack on StarryNet to provide delay with processing time. Based on IPv6 which has an optional \emph{hop-by-hop} header for every intermediate node to process packets, the verification-related data (seen in Fig.~\ref{fig:auth fields}) are embedded in the hop-by-hop header. We use HMAC-SHA256 \cite{sha256} to calculate the digest and extract its first 4 bytes. $Src$ pings $Dst$ continuously. Each ping packet is processed in an individual queue by the module \texttt{Netfilter}~\cite{netfilter}. As seen in Alg.~\ref{alg:process}, after obtaining the segment detour delay thresholds and path from the controller, $Src$ pre-processes each ping packet by embedding them with timestamp, $HASH$, and $AUTH$ fields. The relays update the $AUTH$ fields and forward the packets to $Dst$ which makes the final verification decision. The normal nodes forward the packet without extra operations. The whole simulation environment is built on two high-end servers equipped with Xeon(R) Gold 5215 CPUs (2.50GHz) and 512GB memory.
\noindent
\textbf{Dataset and parameter setting.}
We choose 197 communication city pairs distributed mainly in America and part of South Asia within the service range of Starlink~\cite{servicemap}. To measure the overheads and accuracy, we ping $6000s$~(longer than an orbital period) for some city pairs with different path lengths and numbers of relays simultaneously on StarryNet. In \name, we set three cases of $\theta=1/2/3$ that measure the distance from the $NLRP$ to the risk satellites in Constraint \ref{st2}. We apply Alibi's relay selection methodology for LSNs~(which is also used in \S\ref{subsec:limitation}) and select ground stations as relays. Alibi has a similar user-configurable variable $f=0/0.5$ to find relays. A larger $f$ means fewer relays can be found since the target area where relays are located is smaller, but it's better to resist delay jitters incurred by congestion, \etc\,We set 2 different-sized countries as risk areas: Egypt and North Korea.

\noindent
\textbf{Constellation parameters.}
Our LSN simulation is based on real-world LEO constellation parameters. Specifically, in addition to the constellation and basic network setups in \S\ref{subsec:limitation}, we also simulate Amazon Kuiper with 1156 satellites evenly distributed in 34 orbital planes~\cite{kuiper}.

\subsection{Verification accuracy}
\label{subsec:verification_accuracy}
We set a traffic hijacking scene where packets are forwarded to the risk nodes to see if the prolonged detour delay can be detected. After running our experiments on StarryNet for several hours, \name's verification accuracy reaches $100\%$ in the cases of different numbers of relays. The real detour delay is larger than the pre-set upper bound, as such, no packet that enters the risk area is considered as not traversing the risk area. In the simulation testbed, Alibi produces $38.7\%$ and $27.3\%$ average FP and FN ratios when the risk area is Egypt. Such poor accuracy of Alibi results from frequent variation of the GSLs between satellites and static relays.

\begin{figure}[t] 
	\centering
	\vspace{-0.2in}
	\subfloat[Egypt.]{
		\begin{minipage}[t]{0.48\linewidth} 
			\centering{
				\includegraphics[width=\textwidth]{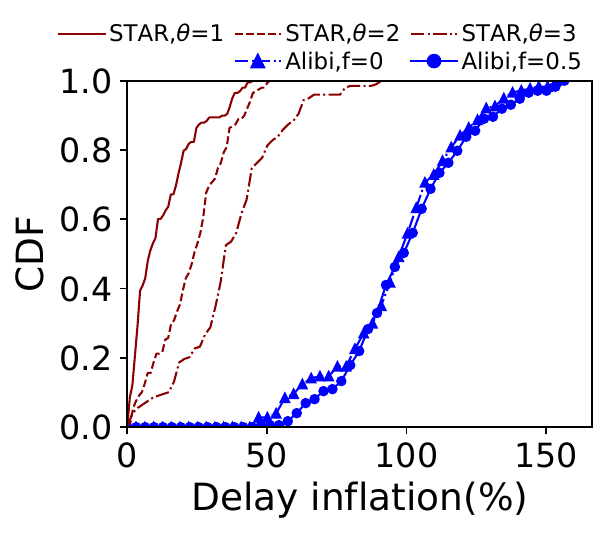}} 
		\end{minipage}%
	}
	\subfloat[North Korea.]{
		\begin{minipage}[t]{0.48\linewidth}
			\centering{
				\includegraphics[width=\textwidth]{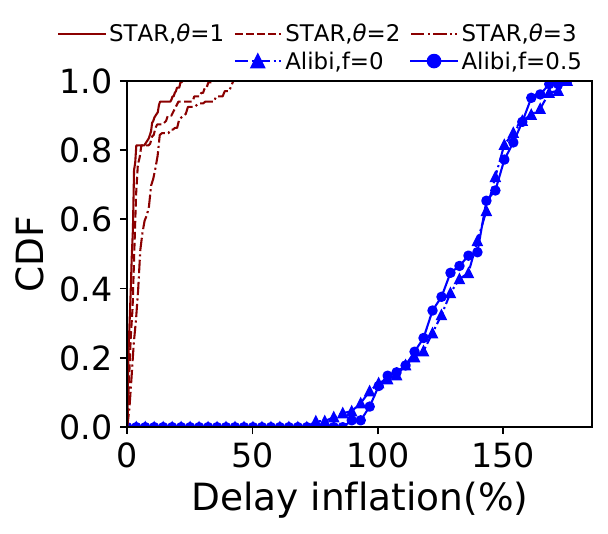}}
		\end{minipage}%
	}
	\caption{Delay inflation of \name~(\textbf{STAR}) and Alibi.}
	\vspace{-0.10in}
	\label{fig:latency inflation}
\end{figure}

\vspace{-0.1in}
\subsection{Network performance} 
\vspace{-0.05in}
Selecting relays far away from the risk area ensures better verification accuracy but incurs larger delay inflation. So we set a configurable variable $\theta$ in Constraint \ref{st2} that depicts the size of $NLRP$ (\ie\,the distance between the $NLRP$ borders and the marginal risk satellites) to make a trade-off between verification accuracy and end-to-end delay. As shown in Fig.~\ref{fig:latency inflation}, a larger $NLRP$ does incur a longer end-to-end delay, which holds true in both constellations as shown in Table~\ref{tab:latency inflation}. However, from a global perspective, the average extra delay of \name is within a reasonable range compared with Alibi. The maximum delay inflation of Alibi reaches $156.54\%$ and $175.49\%$ in two risk countries respectively when $f=0.5$. Besides, limited by the small number and uneven distribution of ground relays, Alibi cannot always find verifiable relays. When $f$ is larger, there is a high possibility that no fixed relays will be found. From our analysis, only 112 of the 197 communication pairs can be assigned a fixed relay when the risk area is North Korea. However, \name can choose global satellites as relays to verify a path only if the users' access satellites are not located inside the $NLRP$.

On top of that, Fig.~\ref{fig:delay thresholds} plots the detour delay thresholds under different $\theta$s defined in Constraint \ref{st2}. \name achieves a delay buffer from tens to hundreds of milliseconds. This additional buffer against unpredictable delay jitters functions better when the $NLRP$ is larger. When avoiding Egypt, the detour threshold increases as the $\theta$ increases. However, there is no explicit detour threshold increase when avoiding North Korea. For example, the average detour thresholds of the three cases of different $\theta$s in Starlink are all about 75ms, which means when the risk area is small, setting a smaller $\theta$ may be enough to achieve a high verification accuracy. 

\begin{table}[t]
\caption{Average delay inflation under different $\theta$s.}
\label{tab:latency inflation}
\vspace{-0.1in}
\begin{center}
\begin{tabular}{|c|c|c|c|c|}
\hline
&\multicolumn{2}{c|}{\textbf{Starlink}}&\multicolumn{2}{c|}{\textbf{Kuiper}} \\
\cline{2-5} 
& \textbf{\textit{Egypt}}& \textbf{\textit{North Korea}}& \textbf{\textit{Egypt}}& \textbf{\textit{North Korea}} \\
\hline
$\theta=1$& $12.44\%$&$3.63\%$&$10.60\%$&$2.37\%$ \\
\hline
$\theta=2$& $23.31\%$&$5.44\%$&$15.67\%$&$3.04\%$ \\
\hline
$\theta=3$& $36.41\%$&$9.01\%$&$24.73\%$&$4.35\%$ \\
\hline
\end{tabular}
\end{center}
\end{table}

\begin{figure}[t] 
	\centering
	\subfloat[Starlink, $RA$ is Egypt.]{
		\begin{minipage}[t]{0.48\linewidth} 
			\centering{
				\includegraphics[width=\textwidth]{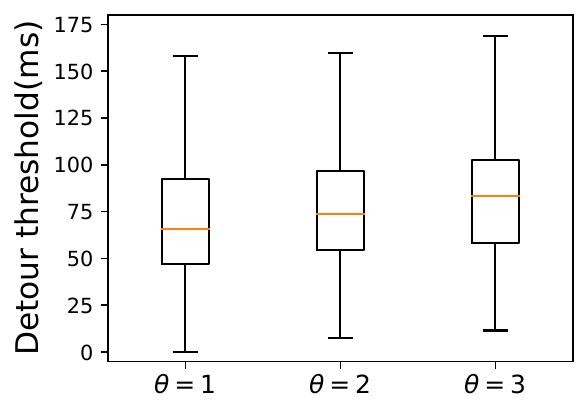}} 
				\vspace{-0.15in}
		\end{minipage}%
	}
	\subfloat[Kuiper, $RA$ is Egypt.]{
		\begin{minipage}[t]{0.48\linewidth}
			\centering{
				\includegraphics[width=\textwidth]{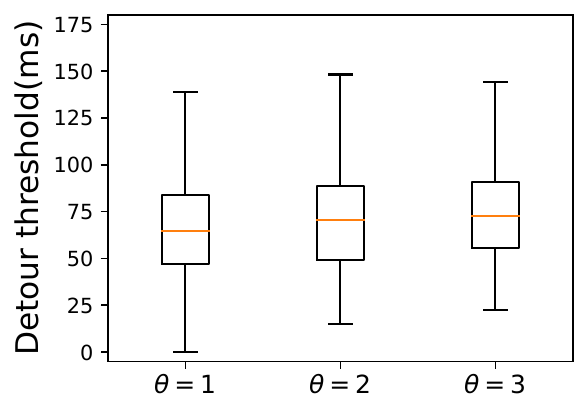}}
				\vspace{-0.15in}
		\end{minipage}%
	}
	\\
	\subfloat[Starlink, $RA$ is North Korea.]{
		\begin{minipage}[t]{0.48\linewidth}
			\centering{
				\vspace{-0.15in}
				\includegraphics[width=\textwidth]{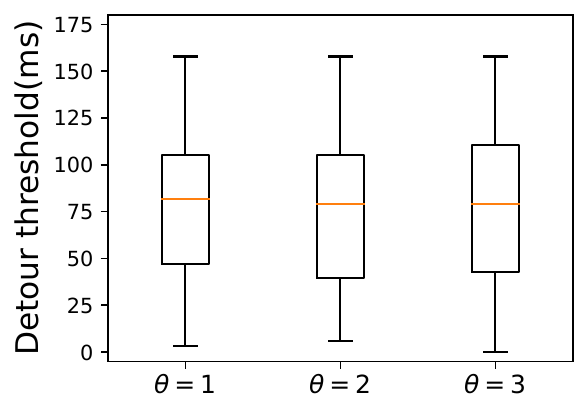}}
				\vspace{-0.15in}
		\end{minipage}%
	}
	\subfloat[Kuiper, $RA$ is North Korea.]{
		\begin{minipage}[t]{0.48\linewidth}
			\centering{
				\vspace{-0.15in}
				\includegraphics[width=\textwidth]{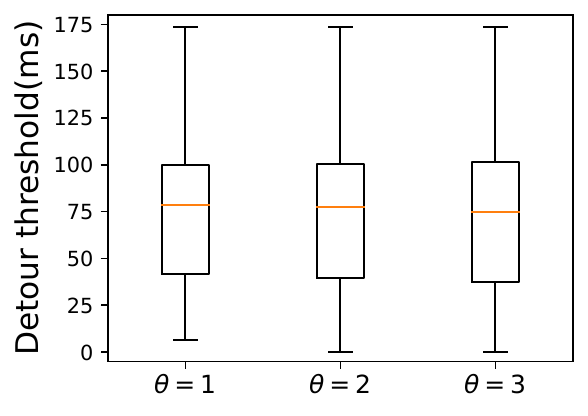}}
				\vspace{-0.15in}
		\end{minipage}%
	}
	\caption{Detour thresholds under different cases.}
	\label{fig:delay thresholds}
\end{figure}

\subsection{Verification overheads}
We analyze the overheads in two aspects: \emph{communication overhead} and \emph{computation overhead} and compare \name with three crypto-based approaches ICING \cite{CoNext11-pv}, OPT \cite{SIGCOMM14-pv}, and EPIC \cite{sec20-pv}. The verification header length of the three methods increases as the path length grows because they verify the path hop by hop.

\noindent
\textbf{Communication overhead.} Our \name field length is at most $48B$ no matter how many hops $N$ a route has as there are two relays at most~(one relay consumes $16B$ fields) as shown in Table~\ref{tb:header lenghth}. To compare them in a meaningful way, we use a $40B$ IPv6 header plus the security-related fields.

As shown in Fig.~\ref{fig:GR}, the longer the path is, the more communication overhead the hop-by-hop methods incur especially ICING. Since the constellation scale is much larger than a single AS in the terrestrial network, it's normal to transmit traffic on a path with tens of hops in LSNs. Fig.~\ref{fig:GR} plots the theoretically maximum goodput ratios~(GR) under different hops. \name achieves 110.16\%, 43.53\%, and 11.33\% higher GRs than ICING, OPT, and EPIC respectively in the case of 30 hops, 1024B payload, and the number of relays $\sigma= 2$.

\begin{table}[t]
\begin{center}
\caption{Security-related field length~(Bytes) in different verification methods.} 
\vspace{-0.1in}
\label{tb:header lenghth} 
\begin{tabular}{|c|c|c|c|c|} 
\hline
\cline{2-5}
\multirow{2}*{\textbf{Path Len $N$}} ~& \textbf{ICING\cite{CoNext11-pv}} &\textbf{OPT\cite{SIGCOMM14-pv}}& \textbf{EPIC\cite{sec20-pv}} & \textbf{\name}\\ 
\cline{2-5}
& $13+42N$ & $52+16N$ & $24+5N$ & 16/32/48\\ 
\hline
$ for\,N=10$ & $433$ & $212$ &$74$ &16/32/48\\
\hline
$ for\,N=20$ & $853$ & $372$ &$124$ & 16/32/48 \\
\hline
$ for\,N=30$ & $1273$ & $532$ &$174$ & 16/32/48\\
\hline
\end{tabular}
\end{center}
\end{table}

\begin{table}[t]
\begin{center}
\caption{Verification delays comparison~($\mu s$). $\sigma$ is the number of relays and $N$ is the path length.} 
\label{tab:process delay} 
\begin{tabular}{|c|c|c|} 
\hline
 &\name &EPIC\\ 
\cline{2-3}
\hline
$\sigma=0,N=35$ & 172$\mu s$ & 6241$\mu s$ \\
\hline
$\sigma=1,N=32$ & 308$\mu s$ & 5820$\mu s$ \\
 \hline
$\sigma=2,N=34$& 550$\mu s$ & 6062$\mu s$ \\

\hline
\end{tabular}
\vspace{-0.2in}
\end{center}
\end{table}

\begin{figure}[t]
    \centering
    \subfloat[20 hops.]{\includegraphics[width=4cm,height=3.3cm]{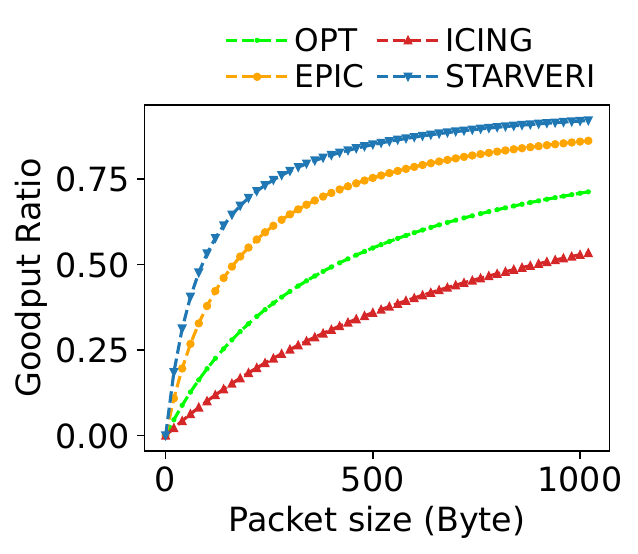}}\quad 
    \subfloat[30 hops.]{\includegraphics[width=4cm,height=3.3cm]{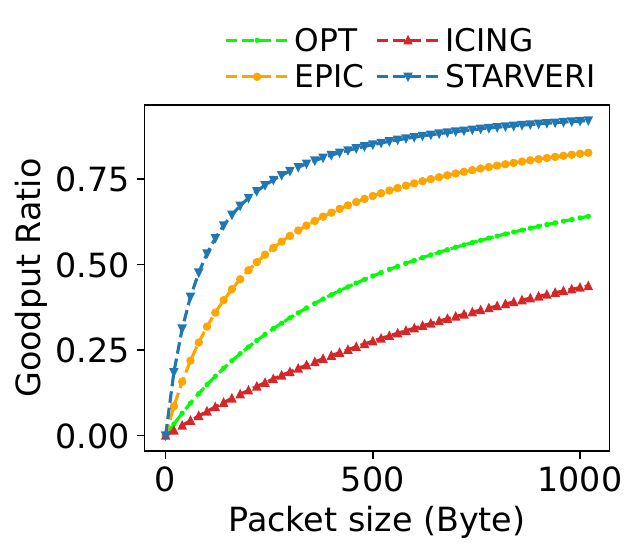}} 
    \vspace{-0.1in}
   
    \caption{Goodput ratios in different hops. (The number of relays of \name $\sigma=2$).}
    \label{fig:GR}
\end{figure}

\noindent
\textbf{Computational overhead.} The complexity of computing relays and detour thresholds is $O(|\mathcal{RS}|\cdot N)$ and there is no need to calculate them frequently. If none of ${sat}_s$, ${sat}_d$ and $NLRP$ changes, the relays will change neither. Here we discuss the variation frequency of $NLRP$. Fig.~\ref{fig:vari frequency} plots the variation frequency of $NLRP$ and the risk satellites $\mathcal{RS}$ every 1000s when the risk area is Egypt. In Starlink, the average variation interval of $NLRP$'s up and bottom planes is about 260.70s. Generally, \emph{the variation frequency of $NLRP$ is fewer than that of $\mathcal{RS}$}. The satellite distribution in Kuiper is sparser, so even if $\mathcal{RS}$ changes, the $NLRP$ may not change. So \name largely decreases relay selection and key negotiation frequency. 

\begin{figure}[t] 
		\subfloat[Starlink.]{
	\centering
		\begin{minipage}[t]{0.48\linewidth} 
				\vspace{-0.2in}
				\includegraphics[width=\textwidth]{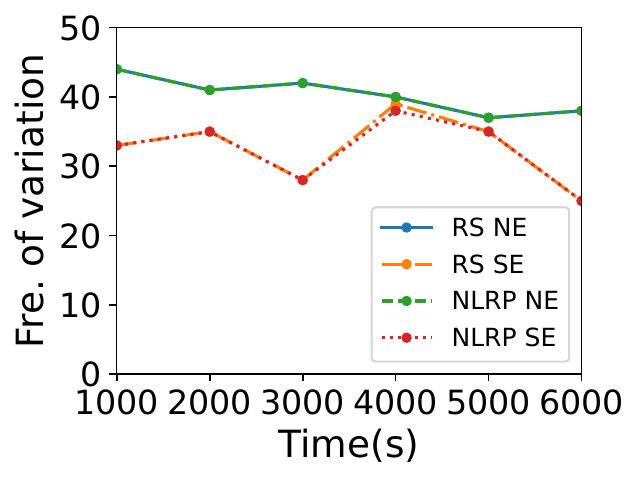}
		\end{minipage}%
	}
	\subfloat[Kuiper.]{
	\centering
		\begin{minipage}[t]{0.48\linewidth}
			\vspace{-0.2in}
				\includegraphics[width=\textwidth]{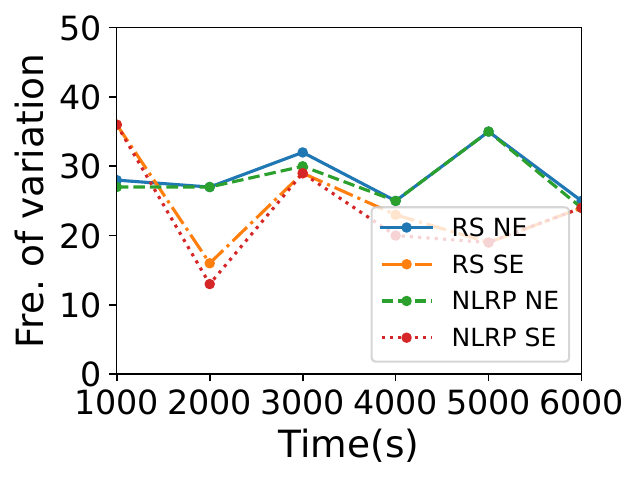}
		
		\end{minipage}
	}
\caption{Variation frequency of $\mathcal{RS}$ and $NLRP$ of Egypt. } 
	\label{fig:vari frequency}
\end{figure}



\begin{figure}[t]
	\centering
	\begin{minipage}[t]{0.47\linewidth} 
		\centering{
			\includegraphics[width=\textwidth]{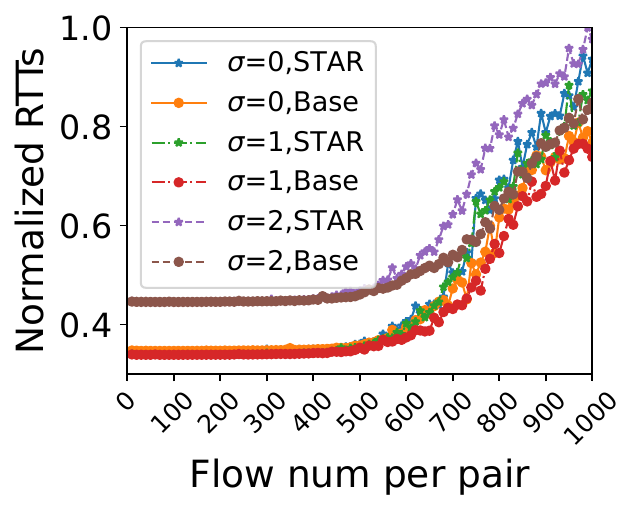}}
		\caption{Normalized RTTs under different flow scales when avoiding Egypt. \small{(\textbf{Base} is baseline without verification.)}}
		\label{fig:flownum}
	\end{minipage}
	\hspace{1mm}
	\begin{minipage}[t]{0.47\linewidth}
		\centering{
			\includegraphics[width=\textwidth]{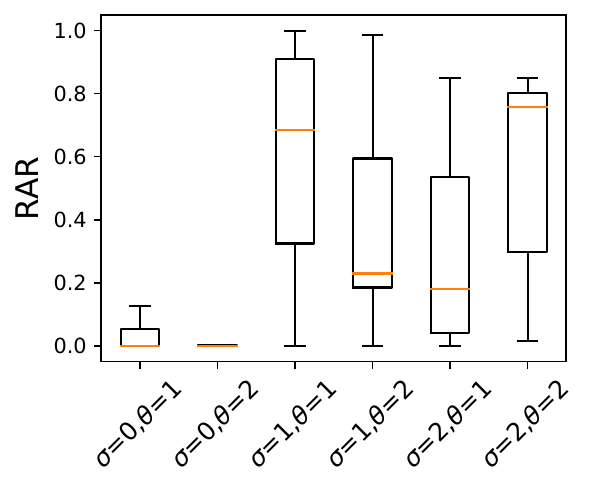}}
		\caption{RARs of all communication pairs in Starlink.}
		\label{fig:RAR}
	\end{minipage}%
	\hspace{0.5mm}
\end{figure}

Besides, \name is not a hop-by-hop verification method, incurring only a little verification delay caused by at most two cryptographic operations. We compare \name to EPIC which also has one MAC operation per hop. As shown in Table~\ref{tab:process delay}, \name's verification delay is 97.2\% less than EPIC when the path has 35 hops.


\subsection{Scalability analysis}
We analyze the RTTs when a path loads different numbers of flows in parallel obtained from StarryNet as shown in Fig.~\ref{fig:flownum}. When the number of flows is small, \name nearly incurs no verification delays. As the number of flows grows, \name incurs more verification delay, especially when $\sigma=2$. The average verification delay of processing between 700 and 1000 flows in parallel is 86ms, 92ms, and 134ms in three $\sigma$s respectively. Hence, SNOs should consider setting a larger $NLRP$ when the number of flows is large.

Next, we analyze the number of required relays. If the number is small, the key negotiation overhead is low. \name will choose a certain number $\sigma$ of relays for a communication pair $cp$ at a certain time. The proportion of these time slots when selecting $\sigma$ relays is $RAR_\sigma^{\,cp}$. For example, \name selects a relay for a pair $cp$ for 600s during the past 1000s, so $RAR_1^{\,cp}=600/1000=0.6$. Fig.~\ref{fig:RAR} plots the boxplot of $RAR_\sigma$s of 197 global pairs in Starlink when avoiding Egypt. The average $RAR_{2}$ of all these pairs when $\theta=1$ is only 29.6\%.

\section{Discussion}
\noindent
\textbf{Probing period.}
If the probing period is too short, more probing packets are needed; and if it's too long, the previously measured delay ground truth may not work if the path has changed. 
Based on our simulations, we quantify the path variation frequency. We set the access strategy that the end users connect to the nearest satellites. The results show that when the risk area is Egypt and $\theta=1$, the paths vary every 26.2s and 26.7s on average in Starlink and Kuiper respectively during the 6000s. The maximum variation interval is 124s and 148s respectively. Normally, a path variation occurs when either of the following cases occurs: i) the access satellites of the end-users vary, or ii) the risk satellites vary. In addition, the path frequency varies with different-sized risk areas. In the future, we will further quantify the modest probing period considering the path variation.


\noindent
\textbf{Real-world deployment.} Since an SNO controls an individual satellite network, the related verification protocols and standards can be deployed in its satellites, and there are encrypted tunnels among the end terminals (\eg\,dish), satellites, and ground stations \cite{patent-deploy-starlink} to transmit secret authentication information. \name can be deployed on the Path-Aware Internet Architecture \cite{springer-SCION,NEBULA,pathlet_routing, SR} where endpoints can customize paths for given destinations, flows, or packets. After the packets are sent out, satellites follow the forwarding paths embedded in the packets.

\noindent
\textbf{Avoidance of multiple risk areas.}
Multiple risk areas may be set by the SNO simultaneously and \name still works by deciding the intersection relations between the frame constructed by the access satellites and the $NLRP$ of each risk area. The SNO can set some middle waypoints between adjacent risk areas. Thus the packets are forwarded to these waypoints in order. Each adjacent waypoint pair (including access satellites) is like an individual city pair's access satellites and the path between them avoids a single risk area. Besides, if these risk areas are close, SNO can aggregate their $NLRP$s as a larger one and then follow the way of \name. So it's possible to have only 2 relays when avoiding two risk areas located closely but the aggregated $NLRP$ will be larger. In future work, we will apply \name to the cases where end users are in the $NLRP$s.
\section{Related Works}
\label{sec:related_works}


We briefly discuss other related works uncovered by \S\ref{subsec:limitation}.

\noindent
\textbf{Path-Aware Networking~(PAN).}
PAN architecture proposes a way for end-users to self-define the paths in data planes flexibly~\cite{springer-SCION,NEBULA}. All these works provide a basis for the source to insert a path into the packet to guide the intermediate nodes to forward it.

\noindent
\textbf{LSN routing.} There are few routing methods aimed at avoiding risk areas in LSNs, so we roughly introduce some routing methods that can be modified to apply in avoiding risk areas. In some solutions~\cite{HotNets18-routing,HotNets19-routing}, the source calculates a low-risk path by utilizing the topology that eliminates risk nodes in advance and inserts it into the header of each packet. Location-based routing~\cite{mobicom24-routing} proposes a distributed geographical routing method that utilizes the information of ground cells and relative locations to decide which is the next hop. LRAR~\cite{iwcnc21-routing} is the only routing method for avoiding risk areas. Though these ways can be used to avoid risk areas, it's hard to prove the real avoidance. 

\noindent
\textbf{Traffic engineering in LSNs.} Some traffic engineering methods~\cite{SIGCOMM07-routing,SIGCOMM10-routing,NSDI13-routing} can transfer the risk nodes into faulty nodes and consider them unreachable. As such, they predict the failure and back up all routing tables in advance. But it brings a lot of storage burden to satellites. StarCure~\cite{INFOCOM23-routing} turns the failure nodes (risk nodes) into congested ones and schedules traffic onto other available routes. But it also lacks the ability to verify that the paths avoid traversing the risk area.

\section{Conclusion}
Dynamic LEO satellites may enter the risk areas and suffer from security issues like hijacking and information stealing in LSNs. To prove that the real route does avoid the risk areas, a path verification method should be proposed in LSNs.

However, existing path verification methods can not adapt to the highly dynamic LSNs for their high communication or computation overheads in crypto-based ways, and low accuracy in delay-based ways. Therefore, we propose the lightweight \name that integrates the advantages of both methods and utilizes satellite trajectories, routing information, and propagation delays to verify that the real path does avoid the risk area. Our extensive simulation proves that \name can achieve near 100\% accuracy. Besides, compared with those hop-by-hop methods, \name largely reduces overheads and has great scalability.

\section{Acknowledgements}
\label{sec:acknowledgement}

We thank our shepherd Olaf Maennel and the anonymous
ICNP reviewers for their comments and suggestions. This work is supported by
the National Key R\&D Program of China (No. 2022YFB3105203) and the National Natural Science Foundation of China (NSFC No.62372259).

\bibliographystyle{IEEEtran}
\bibliography{reference}

\end{document}